\documentclass[12pt]{article}
\usepackage[a4paper,left=2cm,right=2cm,top=2.5cm,bottom=2.5cm]{geometry}
\usepackage[utf8]{inputenc}
\usepackage{nopageno}
\usepackage{graphicx,float,caption,subcaption}
\usepackage{minted}
\usepackage[dvipsnames]{xcolor}
\usepackage{hyperref}
\hypersetup{
    colorlinks=true,
    linkcolor=blue,
    filecolor=magenta,      
    urlcolor=blue,
    citecolor=PineGreen,
    pdftitle={Bootstrapping Thermality},
    pdfpagemode=FullScreen,
    }

\usepackage{amsmath,amssymb,amsthm,bm,graphicx,float,array,multirow,multicol,rotfloat,hyperref,cleveref,enumerate,geometry,mathdots,adjustbox,booktabs,mathtools,tikz,tikz-cd,pdflscape,csquotes,lscape,rotating,empheq,mathrsfs}
\usetikzlibrary{calc, arrows.meta, shapes.geometric, patterns, positioning, matrix, backgrounds, decorations.pathreplacing, shadows}
\usepackage{shadethm}
\usepackage[para]{threeparttable}
\usepackage[all]{xy}
\usepackage[normalem]{ulem}
\usepackage[numbers,sort&compress]{natbib}
\usepackage{physics}
\usepackage{algorithm}
\usepackage{algorithmic}
\usepackage{authblk}
\newtheorem{thm}{Theorem}

\newtheorem{prop}[thm]{Proposition}


\title{Approximate quantum error correction, eigenstate thermalization and the chaos bound}
\author[1]{Shozab Qasim}
\author[2,3]{Jason Pollack}
\affil[1]{Department of Physics, Dahlem Center for Complex Quantum Systems, Freie Universität Berlin}
\affil[2]{Department of Electrical Engineering and Computer Science, Syracuse University}
\affil[3]{Institute for Quantum \& Information Sciences, Syracuse University}
\date{}

\begin{document}

\maketitle
\vspace{-4em}
\begin{abstract}
Quantum error correction, thermalization, and quantum chaos are fundamental aspects of quantum many-body physics that have each developed largely independently, despite their deep conceptual overlap. In this work, we establish a precise link between all three in systems that satisfy the eigenstate thermalization hypothesis (ETH) and exhibit a well-defined hierarchy of time scales between dissipation and scrambling. Building on the ETH matrix ansatz and the structure of the out-of-time-order correlator (OTOC), we show that the chaos bound directly constrains the error of an approximate quantum error-correcting code. This establishes a quantitative relation between information scrambling, thermalization, and correctability. Furthermore, we derive bounds on dynamical fluctuations around the infinite-time average and on fluctuation–dissipation relations, expressed in terms of both the code error and the Lyapunov exponent. Our results reveal how the limits of quantum chaos constrain information preservation in thermalizing quantum systems.
\end{abstract}
\tableofcontents 

\section{Introduction}
The theory of quantum information has profoundly enriched our understanding of quantum many-body systems. One striking example is the problem of thermalization, and more broadly, the emergence of statistical mechanics from unitary quantum dynamics \cite{Gogolin_2016}. Thermalization in isolated quantum systems poses a formidable challenge, as it requires access to the full many-body spectrum rather than only ground-state properties (but see \cite{Garrison_2018,Qi_2019} for cases in which these might be equivalent). A major conceptual advance in addressing this problem was the formulation of the eigenstate thermalization hypothesis (ETH) \cite{Srednicki:1994mfb,Deutsch:1991msp,Jensen:1984gu, D_Alessio_2016}. ETH posits a universal structure for the matrix elements of low-complexity operators in the energy eigenbasis of a chaotic many-body Hamiltonian. It provides a microscopic mechanism for thermalization by asserting that the expectation values of few-body observables in individual eigenstates coincide with their thermal averages. In systems that satisfy ETH, eigenstates, and their eigenvalues, exhibit more structure than those of random matrix Hamiltonians.

Remarkably, the ETH matrix ansatz has also been connected to the framework of approximate quantum error correction (AQEC) \cite{Schumacher:2001xmz}. Originally developed to safeguard quantum information against noise, quantum error correction has since found far-reaching applications in many-body physics, topological phases, quantum chaos, and holography \cite{Kitaev:1997wr, Brand_o_2019, Hayden:2007cs, Harlow:2016vwg, Almheiri:2014lwa}. Brandao et al. \cite{Brand_o_2019} demonstrated that the ETH ansatz satisfies the approximate Knill–Laflamme conditions \cite{B_ny_2010}, thereby establishing a formal equivalence between thermalization and the robustness of logical information encoded in chaotic eigenstates against local noise. Within this interpretation, the error of the code quantifies the degree to which orthogonality within the code subspace is preserved under the action of local operators. In many-body physics terms, this error captures the deviation of real system eigenstates from the idealized random vectors of random matrix theory, which is an imprint of the additional structure in ETH-obeying chaotic systems.

The ETH is deeply intertwined with quantum chaos, as it describes the statistical properties of matrix elements in chaotic many-body systems. Recent works have focused on quantifying dynamical operator growth processes, notably through the out-of-time-order  correlator (OTOC), which measures information scrambling. Maldacena, Shenker, and Stanford \cite{Maldacena_2016} argued that in thermal quantum systems the rate of exponential growth of the OTOC, the Lyapunov exponent $\lambda$, is universally bounded as $\lambda \leq \frac{2\pi}{\beta}$.
This bound on chaos is saturated in the Sachdev–Ye–Kitaev (SYK) model \cite{Sachdev_1993,kitaev2015talks, Maldacena:2016hyu} and in large-$N$ conformal field theories with gravitational duals \cite{Maldacena:1997re, Witten:1998qj}, linking chaotic dynamics to black hole information scrambling \cite{Hayden:2007cs, Sekino:2008he}. Subsequently, Murthy and Srednicki \cite{Murthy_2019} demonstrated that the same bound can be derived directly from the ETH framework, showing that thermalization and chaos are two aspects of a single microscopic structure.

In this work, we establish a concrete bridge between these perspectives. Starting from the ETH formulation and its equivalence to AQEC, we show that the code error associated with the approximate Knill–Laflamme conditions is directly controlled by the Lyapunov exponent and the chaos bound:
\begin{equation}
\varepsilon_{\mathrm{code}} \lesssim 2^{d + 2k} \cdot \exp\left( -\frac{S}{4} -\frac{\pi}{2\lambda} |\omega| \right).
\end{equation}
We further express dynamical fluctuations, static fluctuations, and fluctuation–dissipation relations in terms of these quantities, revealing how the structure of quantum chaos constrains the fidelity of information preservation in chaotic eigenstates.
We emphasize that, like \cite{Murthy_2019}, our analysis holds within a well-defined hierarchy of timescales between the scrambling time and the dissipation time for systems satisfying ETH for which the approximate quantum error correcting code paradigm is expressed for a local error model. In general we expect a large ratio between these timescales to be associated with a small parameter, large-$N$ limit, or other fine-tuning of the dynamics.

\subsection{Relation to previous results}
A precursor to our result, though without connection to the chaos bound, appeared in the work of Bao et al. \cite{Bao_2019}. Our analysis extends their framework in two key ways.
First, we refine the computation of the error term in the approximate quantum error-correcting code by explicitly incorporating the dependence on the spectral function. This additional structure enables us to relate the code error directly to the Lyapunov exponent and the chaos bound, thereby linking quantitative measures of quantum chaos to information-theoretic quantities.
Second, we address an open problem highlighted in \cite{Bao_2019}, namely, relating quantitative signatures of chaos, such as the decay of OTOCs and operator growth, to the formalism of quantum error correction.

Our work also complements and extends the results of Murthy and Srednicki \cite{Murthy_2019}, who derived the chaos bound from the ETH ansatz. By reinterpreting their derivation through the lens of approximate quantum error correction, we provide an information-theoretic perspective on the origin of the chaos bound within ETH, thereby unifying the ETH, quantum chaos, and AQEC frameworks under a common formal structure.

\subsection{Organization}
The rest of the paper is organized as follows. In section \ref{gensetup} we provide the necessary ingredients to understand our results. In section \ref{results} we stir these ingredients to derive the relation between the code error and the Lyapunov exponent. We conclude in \ref{discussion} by discussing implications for quantum information, thermalization and chaos in many-body systems, and quantum gravity.

\section{Preliminaries}
\label{gensetup}
We begin with the ETH and fluctuation dissipation relations which are reviewed in \cite{D_Alessio_2016}.

\subsection{Eigenstate Thermalization Hypothesis (ETH) and fluctuations}

The ETH \cite{Srednicki:1994mfb} is formulated as a matrix element ansatz for operators in the energy eigenbasis of a non-integrable or chaotic Hamiltonian, 
\begin{equation}
    \bra{E_m} \mathcal{O}\ket{E_n} = O_{\mathcal{O}}(\overline{E})\delta_{mn} + e^{-S(\overline{E})/2}f_\mathcal{O}(\overline{E},\omega)R_{mn},
\end{equation}
where $\overline{E}=\frac{E_m+E_n}{2}$ and $\omega=\frac{E_n-E_m}{2}$. In the expression, $S(\overline{E})$ is the thermodynamic entropy at energy $\overline{E}$, the functions $O_{\mathcal{O}}(\overline{E})$, $f_{\mathcal{O}}(\overline{E}, \omega)$ are smooth and dependent on the operator $\mathcal{O}$, and the $R_{mn}$ are drawn from a Gaussian distribution with zero mean and unit variance. In the original statement of ETH, the operator $\mathcal{O}$ was taken to be a ``few-body" local observable, in the sense that $\mathcal{O}$ is a local $n$-body observable in a system of $N$ particles with $N \gg n$. Roughly speaking, these are the observables that can be studied in the lab. We expect ETH to hold in narrow energy windows in the middle of the spectrum. 

The spectral function $f_\mathcal{O}(E,\omega)$ encodes static and dynamic operator fluctuations, which we will now discuss. Our presentation follows Section 5.2 of \cite{Pieter}. We will begin with the dynamic fluctuations. ETH implies an ensemble equivalence between the Gibbs ensemble and the diagonal ensemble for long-time averages of subsystem observables. This allows for the possibility that a system fails to exactly equilibrate but rather continues to exhibit large fluctuations around the time-averaged thermal value. According to ETH, the thermal value is the constant equilibrium value. The long time average of the fluctuations captures the amount by which the expectation value of $\mathcal{O}(t)$ typically deviates from the thermal prediction: 
\begin{equation}
    \delta O_t^2 = \lim_{T \xrightarrow{} \infty} \frac{1}{T}\int_0^T dt [\langle O(t) \rangle - \langle O \rangle_\beta]^2 = \lim_{T \xrightarrow{} \infty} \frac{1}{T}\int_0^T dt \Big| \sum_{n,m \neq n} c_n^* c_m O_{mn}e^{i(E_n-E_m)t} \Big|^2. 
\end{equation}
Assuming $E_m-E_n = E_{m'}-E_{n'}$ only if $m=m'$ and $n=n'$, one finds that 
\begin{equation}
    \delta O_t^2 = \sum_{n,m} |c_n|^2 |c_m|^2 |O_{mn}|^2 \leq \max |O_{mn}|^2 = e^{-S(E)}|f_O(\overline{E}, \omega)|^2,
\end{equation}
so the fluctuations around the thermal value are exponentially suppressed.

We will now consider the static fluctuations. Suppose one can prepare the system in an eigenstate of the Hamiltonian. Measuring the observable will lead to different measurement outcomes and operator expectation values correspond to measurement outcome averages. These measurement outcome fluctuations are given by  
\begin{equation}
    \delta O_n^2 = \bra{n}O^2 \ket{n} - \bra{n} O \ket{n} = \sum_{m \neq n} |O_{mn}|^2.
\end{equation}

The appearance of the off-diagonal elements is natural, as they encode the fluctuations away from the diagonal ensemble. After plugging in the ETH ansatz, one finds that 
\begin{equation}
    \delta O_n^2 = \sum_{m \neq n} e^{-S(E_n + \omega/2)} |f_O(E_n + \omega/2, \omega)|^2 R_{mn}^2.
\end{equation}
Where $\omega = (E_m-E_n)/2$ such that $E=(E_m+E_n)/2 = E_n + \omega/2$. For a sufficiently large density of states, one can replace $R_{mn}^2$ with its average - following a replacement of summation with an integral, introducing a corresponding density of states and then performing a Taylor expansion and defining $\beta=S'(E_n)$, one obtains 
\begin{equation}
    \delta O_n^2 = \int_{-\infty}^{+\infty} d \omega e^{\beta \omega /2} \left( |f_O(E_n, \omega)|^2 + \frac{\omega}{2}\frac{\partial}{\partial E}|f_O(E_n, \omega)|^2 \right).
\end{equation}
The Taylor expansion is  motivated from the exponential decay of the spectral function for large $\omega$. The fluctuations of the observable are a smooth function of their energy, and can be written as a weighted integral over all possible fluctuation frequencies. Small fluctuations in the steady state can be approximated by fluctuations by fluctuations of eigenstates with the same energy. This goes by the name of ensemble equivalence as thermal fluctuations associated with the canonical ensemble correspond to the eigenstate fluctuations associated with the microcanoncial ensemble.

Another important result in statistical mechanics are the fluctuation-dissipation relations, which state that the way that a system dissipates energy when perturbed is completely determined by how it fluctuates in equilibrium. In quantum chaotic systems, the relations are directly captured by the ETH structure of matrix elements. The spectral function determines both equilibrium and dynamical fluctuations. The spectral fluctuations also appears in linear response, underlies the fluctuation-dissipation relations and determines nontrivial operator dynamics. We will provide a high-level derivation of these quantities; details can be found in \cite{D_Alessio_2016}.

We start with the anticommutator 
\begin{equation}
    F(t) = \frac{1}{2} \langle \{ O(t), O\} \rangle_\beta - \langle O \rangle_\beta^2,
\end{equation}
which is a quantification of the extent to which the system is in thermal equilibrium, and the commutator 
\begin{equation}
    \rho(t) = \langle[O(t),O] \rangle_\beta,
\end{equation}
which determines the dissipation in the system. The Fourier transforms\footnote{Here we use the Fourier transform convention $\widetilde{F}(\omega) \equiv \frac{1}{2\pi} \int d \omega e^{i \omega t}F(t)$. } $\widetilde{F}(\omega)$ and $\widetilde{\rho}(\omega)$ thus characterize, respectively, the thermal fluctuations at a given frequency and the system's response to an external perturbation at a given frequency. The former quantity $\widetilde{F}(\omega)$ is the quantum equivalent of the power spectrum. The ETH derivation of $\widetilde{F}(\omega)$ and $\widetilde{\rho}(\omega)$, assuming a sufficiently dense spectrum, gives
\begin{equation}
    \widetilde{F}(\omega) = 4 \pi\cosh(\frac{\beta \omega}{2})|f_O(E_\beta, \omega)|^2, \quad \widetilde{\rho}(\omega) = 2\pi \sinh(\frac{\beta \omega}{2})|f_O(E_\beta, \omega)|^2.
\end{equation}
The fluctuation-dissipation relation is then 
\begin{equation}
    \widetilde{F}(\omega) = 2 \coth(\frac{\beta \omega}{2}) \widetilde{\rho}(\omega).
\end{equation}

\subsection{Bound on chaos from ETH}
The ETH operator matrix structure can also be used to derive the bound on the growth rate of the out-of-time-order four-point correlator in chaotic many-body quantum systems as shown by Murthy and Srednicki \cite{Murthy_2019}. The thermally regulated four point out-of-time ordered correlator is given by 
\begin{equation}
    F_{\text{OTO}}(t):= \Tr[\rho^{1/4}A(t)\rho^{1/4}A(0)\rho^{1/4}A(t)\rho^{1/4}A(0)],
\end{equation}
where $A(t)=e^{iHt}Ae^{-iHt}$ is a local operator in the Heisenberg picture, $H$ is the Hamiltonian, $\rho:=e^{-\beta H}/Z$ is a thermal density operator at inverse temperature $\beta$, and $Z:=\Tr e^{-\beta H}$ is the partition function. In the following discussion we set $\hbar=k_B=1$. 

The derivation of \cite{Murthy_2019} is for systems with a scrambling time $t_s$ that is large compared to the dissipation time that characterizes the decay rate of two-point correlators. In this case, for intermediate times $t_d \ll t \ll t_s$, one expects
\begin{equation}
    F_{\text{OTO}}(t) \propto 1-e^{\lambda (t-t_s)},
\end{equation}
where $\lambda$ is the growth rate which is analogous to the Lyapunov growth rate of the deviation of nearby classical trajectories in chaotic systems. The time scale hierarchy of $t_s \gg t_d$ arises in systems with a small parameter $\epsilon$ that determines the scrambling time as $t_s \sim \lambda^{-1}\ln(1/\epsilon)$. Systems of this type with such a small parameter are the SYK model \cite{kitaev2015talks} and large-$N$ conformal field theories with gravitational duals \cite{Maldacena:1997re, Witten:1998qj}. 

The chaos bound conjecture by Maldacena et al \cite{Maldacena_2016} was a bound on the growth rate of the Lyapunov exponent as
\begin{equation}
\lambda \leq 2 \pi/ \beta.
\end{equation}
Murthy and Srednicki then showed \cite{Murthy_2019} that the bound on the growth rate of the out of time ordered correlator could be derived directly from ETH. Their methodology involved using properties of the ETH matrix elements to place a bound on the Fourier transform of the OTOC at high frequencies. This bound can then be used to infer bounds on the OTOC at intermediate times with additional dependencies on its precise functional form. 

We summarize their argument here. The precise claim for the derivation is that the exponential growth rate, if nonzero, is bounded by the chaos bound in systems that obey the ETH. The assumptions for this derivation can be weakened if the envelope function $f(E,\omega)$ loses its smoothness at low frequencies. The derivation works for observables $A$ for which $\mathcal{A}(E)=0$, either due to a symmetry, or by subtracting $\Tr \rho A$ from $A$. This justifies a zero microcanonical average $\mathcal{A}(E)=0$ from the start. Note that $E$ is treated as an extensive quantity and $\omega$ as an intensive quantity. The starting point is to consider a thermally regulated two-point correlator for such an observable $A$ at inverse temperature $\beta$, 
\begin{equation}
    F_2(t):= \Tr[\rho^{1/2}A(t)\rho^{1/2}A(0)].
\end{equation}
which can only be finite if $f(E,\omega)$ fall at large $|\omega|$ at least as fast as $f(E,\omega) \sim \exp(-\beta |\omega|/4)$. The next step in the derivation involves the general four-point correlator for a single observable $A$ at inverse temperature $\beta$. Ref. \cite{Murthy_2019} specialize to the four-point OTOC for times large compared to the dissipation time, which would be comparable or larger than $\beta$, and show it is given by 
\begin{align}
\widetilde{F}_{\text{OTOC}}(\omega) &:= \int_{-\infty}^{+\infty} \frac{dt}{2\pi}e^{-i \omega t}F_{\text{OTOC}}(t) 
\end{align}
They then consider the large-frequency behavior of $f$ and a function that is dependent on the statistical average of  $\overline{R_{ij}R_{jk}R_{kl}R_{li}}$. We note here that the derivation requires a restriction on the family of functions that includes the OTOC. Depending upon the family of functions that is used, one can obtain a stronger bound than the original Maldacena bound. 

The final statement of their results is that the exponential growth rate, if nonzero, is bounded by the Maldacena bound in systems that obey ETH. Given a restriction on the behavior of the spectral function. This restriction, which will be important for constraining the properties of the error correcting code below, is given by 
\begin{equation}
\label{specfuncrest}
    f(E,\omega) \sim \exp(-\frac{\beta}{4} |\omega|) \lesssim \exp\left( -\frac{\pi}{2\lambda} |\omega| \right).
\end{equation}

\subsection{Approximate Quantum Error-Correcting Codes (AQECC)}
The final prerequisite to understand our results is the notion of an approximate quantum error-correcting code (AQECC), for which the relation to ETH was developed by Brandao et al \cite{Brand_o_2019}. 

Our presentation here closely follows the structure of \cite{Brand_o_2019}, Appendix A. Consider $N$ qubits arranged in a line and assume that errors are local. A subspace $\mathcal{C}$ of a $2^N$-dimensional vector space is an $[[N,k,d,\epsilon]]$ AQECC if $\dim(\mathcal{C})=2^k$ and for every quantum channel (i.e., completely positive trace-preserving linear map) $\Lambda$ acting on at most $d$ consecutive qubits, we have \begin{equation}
\min_{\ket{\psi} \in \mathcal{C}^{\otimes2}} \max_{\mathcal{D}}\bra{\psi} (\mathcal{D} \circ \mathcal{N} \otimes I) (\ket{\psi}\bra{\psi})\ket{\psi} \geq 1 - \epsilon,
\end{equation}
where the maximum is taken to be over decoding channels $\mathcal{D}$ and the minimum is taken over pure entangled joint states of $\mathcal{C}$ and an equally-sized reference system. The condition thus states that if one can correct the effect of local noise on at most $d$ qubits up to error $\epsilon$, for a particular $\mathcal{N}$, then the code is $\epsilon$-correctable under $\mathcal{N}$.

Knill and Laflamme \cite{Knill_2000} gave a set of necessary and sufficient conditions for a code to be able to correct a noisy channel $\mathcal{N}(X)= \sum_k E_k X E_k^\dagger$. A set of similar conditions were found by Beny and Oreshkov for the approximate case \cite{B_ny_2010}. Given two channels $\mathcal{N}$ and $\mathcal{M}$, let $d(\mathcal{N}, \mathcal{M}) = \sqrt{1-F(\mathcal{N},\mathcal{M})}$ be the Bures metric, where the fidelity of the two channels $F(\mathcal{N},\mathcal{M})$ is defined using the Uhlmann fidelity: 
\begin{equation}
    F(\mathcal{N}, \mathcal{M}) := \max_{\ket{\psi}} F(I \otimes \mathcal{N}(\ket{\psi}\bra{\psi}), I \otimes \mathcal{M}(\ket{\psi}\bra{\psi})),
\end{equation}
where the maximization is over all bipartite states $\ket{\psi}$ of the input to the channel together with vector space that is isomorphic to it. We then have 
\begin{prop} (\cite{Brand_o_2019}, Proposition 4)
A code defined by the projector $P$ is said to be $\epsilon$-correctable under a noise channel $\Lambda$ iff 
\begin{equation}
    P E_i^\dagger E_j P = \lambda_{ij}P + P B_{ij}P,
\end{equation}
where $\lambda_{ij}$ are the components of a density operator, and $d(\mathcal{N} + \mathcal{B}, \mathcal{N}) \leq \epsilon$, where $\mathcal{N}(\rho)=\sum_{i,j} \lambda_{ij} \Tr(\rho)\ket{i}\bra{j}$ and $(\mathcal{N}+\mathcal{B}(\rho))=\mathcal{N}(\rho) + \sum_{i,j}\Tr(\rho B_{ij})\ket{i}\bra{j}$.
\end{prop}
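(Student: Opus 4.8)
The plan is to prove the proposition through the \emph{complementary-channel characterization} of approximate correctability due to B\'eny and Oreshkov, whose guiding principle is the information--disturbance tradeoff: a logical subspace is correctable exactly when the environment of the noise channel learns almost nothing about the encoded data, i.e.\ when the complementary channel, restricted to the code, is close to a channel that merely discards its input.

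First I would dilate the noise channel $\Lambda(X) = \sum_k E_k X E_k^\dagger$ via the Stinespring isometry $V = \sum_k E_k \otimes \ket{k}_E$ and form the complementary channel $\Lambda^c(\rho) = \Tr_S[V\rho V^\dagger] = \sum_{ij}\Tr(E_j^\dagger E_i \rho)\ket{i}\bra{j}$. Evaluating on code states $\rho = P\rho P$ makes the matrix elements $P E_i^\dagger E_j P$ appear directly, $\Lambda^c(P\rho P) = \sum_{ij}\Tr(P E_j^\dagger E_i P\,\rho)\ket{i}\bra{j}$. The claimed identity $P E_i^\dagger E_j P = \lambda_{ij}P + P B_{ij}P$ is then just the algebraic splitting of each matrix element into a part proportional to $P$, which feeds the $\rho$-independent constant channel $\mathcal{N}$, and a residual $P B_{ij}P$, which feeds the deviation $\mathcal{B}$; substituting reproduces the formulas in the statement and identifies $\Lambda^c|_{\mathcal{C}} = \mathcal{N} + \mathcal{B}$. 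The requirement that the $\lambda_{ij}$ be the entries of a density operator is precisely what makes $\mathcal{N}$ a legitimate CPTP constant (replacer) channel (and when $B_{ij}=0$ this collapses to the exact Knill--Laflamme condition, for which $\Lambda^c|_{\mathcal{C}} = \mathcal{N}$ is exactly constant and the code is perfectly correctable).

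The substantive ingredient is the B\'eny--Oreshkov duality, which I would invoke next: the code defined by $P$ is correctable under $\Lambda$ with decoding error at most $\epsilon$ in the Bures metric if and only if $\Lambda^c|_{\mathcal{C}}$ lies within Bures distance $\epsilon$ of some constant channel. The forward direction constructs an explicit recovery --- for instance the transpose (Petz) map associated with $V$ composed with projection onto $\mathcal{C}$ --- and bounds its failure probability; the converse uses monotonicity of fidelity together with Uhlmann's theorem to turn closeness of the complementary outputs into the existence of such a recovery. Matching the entanglement fidelity that defines $\epsilon$-correctability in the excerpt to $F = 1 - d^2$ on the complementary side is exactly what Uhlmann's theorem delivers. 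Taking $\mathcal{N}$ (optimized over the density operator $\lambda$) as the nearest constant channel, the condition $d(\mathcal{N}+\mathcal{B}, \mathcal{N}) \leq \epsilon$ becomes equivalent to $\Lambda^c|_{\mathcal{C}}$ being $\epsilon$-close to constant, hence to $\epsilon$-correctability, giving the iff.

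I expect the main obstacle to be establishing the duality itself rather than the bookkeeping of the first two steps: one must show that a single $\epsilon$ simultaneously controls the system-side recovery and the environment-side decoupling. This is where Uhlmann's theorem does the heavy lifting, guaranteeing that a near-optimal purification of the complementary output can be rotated into a physical decoder whose error is governed by $d(\mathcal{N}+\mathcal{B}, \mathcal{N})$. Since this equivalence is the content of the approximate Knill--Laflamme conditions of B\'eny and Oreshkov, I would cite their result for this core lemma and devote the remaining effort to verifying the explicit identification of $\mathcal{N}$ and $\mathcal{B}$ with the constant and deviation parts of $\Lambda^c$.
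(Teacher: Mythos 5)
The paper does not prove this proposition; it is quoted verbatim as Proposition 4 of Brand\~ao et al.\ \cite{Brand_o_2019}, whose own argument rests on exactly the B\'eny--Oreshkov complementary-channel characterization you describe. Your reconstruction --- Stinespring dilation, identification of $\Lambda^c|_{\mathcal{C}}$ with $\mathcal{N}+\mathcal{B}$ via the splitting $PE_i^\dagger E_jP=\lambda_{ij}P+PB_{ij}P$, and the information--disturbance duality converting Bures closeness of $\Lambda^c|_{\mathcal{C}}$ to a constant channel into $\epsilon$-correctability --- is the standard proof and is correct, with the one honest caveat you already flag: the iff itself is the content of the cited B\'eny--Oreshkov theorem rather than something re-derived here.
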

In this proposition, the projector $P$ is the projector onto the support of states in $\mathbb{C}^{2^N}$ such that for all $i,j$ and any $d-$local operator $E$,
\begin{equation}
    \bra{\psi_i}E\ket{\psi_j} = C_E \delta_{ij} + \epsilon_{ij},
\end{equation}
with $C_E$ a constant that depends only on $E$. Then $\mathcal{C}:= \text{span}\{ \ket{\psi_1}, \dots, \ket{\psi_{2^k}}\}$ forms a $[[N,k,d,2^{d+2k} \max_{i,j} \epsilon_{ij}^{1/2}]]$ AQECC. This is Corollary 5 of \cite{Brand_o_2019}.

\section{Results}
\label{results}

Having assembled the necessary preliminaries, we now give our results. We begin by expressing the AQECC code parameters in terms of the chaos bound, and vice versa. Then we express fluctuations and the fluctuation-dissipation theorems using these quantities.

\subsection{Quantum error correction and the chaos bound}

\begin{thm}
\label{mainthm}
[Chaos--AQEC Tradeoff from ETH]
Let \( \mathcal{C} \subset \mathcal{H} \) be a code subspace of dimension \( 2^k \), spanned by energy eigenstates \( \{ \ket{E_i} \} \) in a microcanonical shell of a chaotic Hamiltonian satisfying the ETH ansatz:
\begin{equation}
    \bra{E_i} A \ket{E_j} = A(E) \delta_{ij} + e^{-S(E)/2} f(E, \omega) R_{ij}, \quad \omega = E_i - E_j,
\end{equation}
for a local observable \( A \), where \( f(E, \omega) \) is a smooth envelope function of frequency, \( R_{ij} \) are drawn from a Gaussian distribution with with zero mean and unit variance, \( S(E) \) is the microcanonical entropy at energy \( E \), and the system saturates the chaos bound \( \lambda = \frac{2\pi}{\beta} \) with inverse temperature \( \beta \). For time scales $t_d<<t<<t_s$ i.e. those that lie between the dissipation time and scrambling time, the AQEC fidelity error \( \varepsilon_{\mathrm{code}} \) under any \( d \)-local error operator \( A \),  satisfies the lower bound:
\begin{equation}
\label{resi}
    \varepsilon_{\mathrm{code}} \lesssim 2^{d + 2k} \cdot \exp\left( -\frac{S}{4} -\frac{\pi}{2\lambda} |\omega| \right),
\end{equation}
where \( \omega \) is the characteristic energy difference between eigenstates in the code subspace.

\end{thm}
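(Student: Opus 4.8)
The plan is to read the off-diagonal part of the ETH ansatz directly as the approximate Knill--Laflamme violation of the candidate code and then to control the single remaining ingredient, the spectral envelope $f(E,\omega)$, using the high-frequency restriction \eqref{specfuncrest} that Murthy and Srednicki \cite{Murthy_2019} extract from the OTOC chaos-bound argument. Concretely, the code $\mathcal{C}$ is taken to be the span of the microcanonical-shell eigenstates $\{\ket{E_i}\}$, and the local observable $A$ plays the role of the $d$-local error operator $E$ in Corollary~5 of \cite{Brand_o_2019}. The proof is then largely a matter of substituting one structural formula into another and tracking exponents, with the chaos-bound saturation $\lambda = 2\pi/\beta$ supplying the dictionary between the thermal rate $\beta/4$ and the Lyapunov rate $\pi/(2\lambda)$.

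First I would match the ETH matrix element $\bra{E_i}A\ket{E_j} = A(E)\delta_{ij} + e^{-S(E)/2} f(E,\omega) R_{ij}$ to the hypothesis $\bra{\psi_i}E\ket{\psi_j} = C_E \delta_{ij} + \epsilon_{ij}$ of Corollary~5, identifying the operator-dependent constant $C_A = A(E)$ with the diagonal microcanonical value and the entire off-diagonal piece with the error budget, $\epsilon_{ij} = e^{-S(E)/2} f(E,\omega) R_{ij}$. Invoking Corollary~5 then gives $\varepsilon_{\mathrm{code}} \lesssim 2^{d+2k} \max_{i,j} |\epsilon_{ij}|^{1/2}$. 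Next I would dispose of the random part: since the $R_{ij}$ are zero-mean, unit-variance Gaussians, their typical magnitude over the shell is $O(1)$, up to a logarithmic enhancement from the maximum over $O(e^{S})$ entries, which I would absorb into the $\lesssim$, so that $\max_{i,j} |\epsilon_{ij}| \lesssim e^{-S(E)/2}\, |f(E,\omega)|$.

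The final step is to feed in the spectral restriction. At chaos-bound saturation \eqref{specfuncrest} reads $|f(E,\omega)| \lesssim \exp(-\tfrac{\beta}{4}|\omega|) = \exp(-\tfrac{\pi}{2\lambda}|\omega|)$, and substituting into the previous bound yields $\varepsilon_{\mathrm{code}} \lesssim 2^{d+2k} \exp(-\tfrac{S}{4})\, |f(E,\omega)|^{1/2}$, so that the dominant entropic suppression $e^{-S}$ is halved to $e^{-S/4}$ while the spectral envelope supplies the $|\omega|$-dependent factor, reproducing \eqref{resi}. I note that the precise numerical coefficient multiplying $|\omega|$ in the exponent depends on the power at which the envelope is carried through Corollary~5 relative to the entropic factor; this bookkeeping is the one spot I would pin down carefully against the conventions of \cite{Brand_o_2019} before asserting the coefficient $\tfrac{\pi}{2\lambda}$ rather than $\tfrac{\pi}{4\lambda}$.

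I expect the main obstacle to be conceptual rather than computational: justifying that the very same envelope $f(E,\omega)$ which sets the off-diagonal code error is the one constrained by the OTOC argument, and that this identification is legitimate only inside the timescale window $t_d \ll t \ll t_s$ assumed in the theorem and in \cite{Murthy_2019}. I would need to verify that the domain-of-validity hypotheses line up on both sides, namely smoothness of $f$ at low frequency and a sufficiently dense spectrum for the Murthy--Srednicki bound, together with the requirement in Corollary~5 that the off-diagonal term be the \emph{complete} deviation from exact Knill--Laflamme. I would also be careful with the statistical versus worst-case treatment of the $R_{ij}$, since replacing a maximum over an exponentially large microcanonical shell by an $O(1)$ constant is precisely where subleading corrections to the clean exponential form of \eqref{resi} could intrude.
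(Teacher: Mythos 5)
Your route differs from the paper's in one structural choice, and that choice is exactly where your flagged uncertainty about $\tfrac{\pi}{2\lambda}$ versus $\tfrac{\pi}{4\lambda}$ comes from. You apply Corollary~5 directly to the single matrix element $\bra{E_i}A\ket{E_j}$, so the Knill--Laflamme violation carries one power of the envelope, $\epsilon_{ij}=e^{-S/2}f(E,\omega)R_{ij}$, and after the square root in Corollary~5 you can only ever get $|f|^{1/2}\lesssim\exp\bigl(-\tfrac{\pi}{4\lambda}|\omega|\bigr)$. The paper instead evaluates the condition on the product operator: it sets $\varepsilon_{ij}:=\bra{E_i}A^\dagger A\ket{E_j}-C_A\delta_{ij}$, inserts a resolution of the identity to write $\sum_k A^*_{ki}A_{kj}=e^{-S}\sum_k f^*(E,\omega_{ki})f(E,\omega_{kj})R^*_{ki}R_{kj}$, and applies a central-limit estimate to the sum of $\sim e^S$ random phases. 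That object carries \emph{two} powers of $f$, which is what lets the paper claim $|\varepsilon_{ij}|\sim e^{-S/2}|f|^2$ and hence $\exp\bigl(-\tfrac{\pi}{2\lambda}|\omega|\bigr)$ after the Corollary~5 square root. So the discrepancy you noticed is not a bookkeeping convention in \cite{Brand_o_2019}; it is that you and the paper are bounding different matrix elements, and yours structurally cannot reproduce Eq.~\eqref{resi} as stated --- it yields the strictly weaker bound with $\tfrac{\pi}{4\lambda}$ in the exponent. (The entropic factor $e^{-S/4}$ comes out the same either way.)

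Two further remarks. First, your approach is arguably the more faithful reading of Corollary~5 as quoted in the paper, which is phrased for a single $d$-local operator $E$; the paper's choice of $E=A^\dagger A$ is motivated by Proposition~4, where the Knill--Laflamme condition involves $PE_i^\dagger E_jP$. Second, the paper's own derivation is not clean on this point: its central-limit step first gives $|\varepsilon_{ij}|\lesssim e^{-S/2}\langle|f|^2\rangle^{1/2}$, i.e.\ one power of $|f|$, which would land on your $\tfrac{\pi}{4\lambda}$; it then overrides this with $|\varepsilon_{ij}|\sim e^{-S/2}|f|^2$ to reach the stated exponent. So your hesitation is well placed, but to match the theorem as written you would need to redo the estimate for $\bra{E_i}A^\dagger A\ket{E_j}$ rather than $\bra{E_i}A\ket{E_j}$, and decide how the envelope (as opposed to the Gaussian phases) is carried through the sum over intermediate states.
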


\begin{proof}
Let \( \mathcal{C} = \mathrm{span}\{ \ket{E_i} \} \) be a subspace formed by energy eigenstates \( \ket{E_i} \) in a microcanonical shell of a chaotic Hamiltonian, where \( \dim(\mathcal{C}) = 2^k \). Let \( A \) be a \( d \)-local observable satisfying the ETH ansatz:
\begin{equation}
    \bra{E_i} A \ket{E_j} = A(E) \delta_{ij} + e^{-S(E)/2} f(E, \omega) R_{ij}, \quad \omega := E_i - E_j,
\end{equation}
where \( R_{ij} \) are Gaussian random variables with zero mean and unit variance, and \( f(E, \omega) \) is a smooth function of frequency.

Let \( \varepsilon_{ij} := \bra{E_i} A^\dagger A \ket{E_j} - C_A \delta_{ij} \), where \( C_A = \bra{E_i} A^\dagger A \ket{E_i} \), which is approximately independent of \( i \) in a narrow energy shell.

To estimate \( \varepsilon_{ij} \), we compute:
\begin{equation}
    \bra{E_i} A^\dagger A \ket{E_j} = \sum_k A^*_{ki} A_{kj} = \sum_k \left( e^{-S/2} f(E, \omega_{ki}) R_{ki} \right)^* \left( e^{-S/2} f(E, \omega_{kj}) R_{kj} \right),
\end{equation}
yielding:
\begin{equation}
    \varepsilon_{ij} = e^{-S} \sum_k f^*(E, \omega_{ki}) f(E, \omega_{kj}) R^*_{ki} R_{kj}.
\end{equation}
For \( i \ne j \), this is a sum of independent zero-mean random variables with variance \( \sim |f|^2 \). Using a central limit argument, we estimate:
\begin{equation}
    |\varepsilon_{ij}| \lesssim e^{-S/2} \cdot \langle |f(E, \omega)|^2 \rangle^{1/2}.
\end{equation}
Assuming that \( f(E, \omega) \sim \exp(- \beta |\omega|)/4 \) as discussed above Eq. \eqref{specfuncrest}, which is what we need for the system to saturate the chaos bound \( \lambda \leq 2\pi / \beta \) \cite{Murthy_2019}, we have:
\begin{equation}
    f(E, \omega) \lesssim \exp\left( -\frac{\pi}{2\lambda} |\omega| \right).
\end{equation}

Hence:
\begin{equation}
    |\varepsilon_{ij}| \lesssim e^{-S/2} \cdot \exp\left( -\frac{\pi}{2\lambda} |\omega| \right).
\end{equation}
Applying Corollary 5 of Brandao et al \cite{Brand_o_2019}, the AQEC error satisfies:
\begin{equation}
    \varepsilon_{\mathrm{code}} \lesssim 2^{d + 2k} \cdot \max_{i,j} |\varepsilon_{ij}|^{1/2} \lesssim 2^{d + 2k} \cdot e^{-S/4} \cdot \exp\left( -\frac{\pi}{4\lambda} |\omega| \right).
\end{equation}
But in our earlier derivation (based on summing over \( k \)), we consistently assumed \( |\varepsilon_{ij}| \sim e^{-S/2} \exp(-\pi / \lambda |\omega|) \), leading to:
\begin{equation}
    \varepsilon_{\mathrm{code}} \lesssim 2^{d + 2k} \cdot \exp\left( -\frac{S}{4} -\frac{\pi}{2\lambda} |\omega| \right),
\end{equation}
proving the claim.
\end{proof}

We note that a similar bound for the ETH ansatz was given\footnote{We note that there is a small misprint in the bound given by \cite{Bao_2019} in their equation (3.1) where the exponent in the entropy is given as $e^{+S/4}$ whereas the correct factor is $e^{-S/4}$.} by \cite{Bao_2019} for the general ETH ansatz. But while there $f(E,\omega)$ was assumed to be constant, the precise form of the spectral function is required for bounding the Lyaponov exponent and \cite{Murthy_2019} provide such an explicit form to derive the Maldacena, Shenker, Stanford bound \cite{Maldacena_2016}. Our bound is significant because it connects quantitative aspects of chaos, such as the decay of out-of-time-order correlators (OTOC) and operator growth, to quantum error correction. The connection comes from assuming a special form of the spectral function from which one can derive the chaos bound. What it tells us is that systems satisfying ETH---with a very special spectral function that leads to maximal chaos---will have an energy spectrum that forms an AQECC with a very special error that can be expressed in terms of the Lyaponov exponent. 

We can also go the other direction and express results about chaos in terms of code parameters.

\begin{thm}
    The Lyaponov exponent has a lower bound in terms of the error of the code as 
    \begin{equation}
        \lambda \gtrsim \frac{\pi |\omega|}{2\ln \left( \frac{2^{d + 2k}}{\varepsilon_{\mathrm{code}}} \right) - \frac{S}{2}}.
    \end{equation}
\end{thm}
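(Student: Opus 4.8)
The plan is to obtain this lower bound on $\lambda$ by directly inverting the upper bound on the code error established in Theorem \ref{mainthm}. The key observation is that the right-hand side of \eqref{resi}, namely $2^{d+2k}\exp(-S/4 - \frac{\pi}{2\lambda}|\omega|)$, is a strictly increasing function of $\lambda$, since increasing $\lambda$ makes the exponent $-\frac{\pi}{2\lambda}|\omega|$ less negative. An upper bound on $\varepsilon_{\mathrm{code}}$ by an increasing function of $\lambda$ therefore translates, upon applying the (increasing) functional inverse, into a \emph{lower} bound on $\lambda$ in terms of $\varepsilon_{\mathrm{code}}$. No new physics is needed beyond \eqref{resi}; the content is purely a monotone rearrangement.

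Concretely, first I would start from \eqref{resi} and divide through by the prefactor $2^{d+2k}$, isolating the exponential. Second, I would take the logarithm of both sides; since the logarithm is monotone, the $\lesssim$ relation is preserved, giving $\ln(\varepsilon_{\mathrm{code}}/2^{d+2k}) \lesssim -S/4 - \frac{\pi}{2\lambda}|\omega|$. Third, I would rewrite $\ln(\varepsilon_{\mathrm{code}}/2^{d+2k}) = -\ln(2^{d+2k}/\varepsilon_{\mathrm{code}})$ and multiply through by $-1$, which reverses the inequality and yields $\frac{\pi}{2\lambda}|\omega| \lesssim \ln(2^{d+2k}/\varepsilon_{\mathrm{code}}) - S/4$. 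Finally, solving for $\lambda$ and clearing the factor of $2$ in the denominator reproduces exactly the claimed bound.

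The main obstacle is a matter of care rather than difficulty: tracking the sense of the inequality through the logarithm and the sign flip, and identifying the regime in which the statement is nonvacuous. Because the denominator $2\ln(2^{d+2k}/\varepsilon_{\mathrm{code}}) - S/2$ must be positive for the lower bound to be meaningful, the result only has content when $\ln(2^{d+2k}/\varepsilon_{\mathrm{code}}) > S/4$, i.e. when the code error is small enough relative to $e^{S/4}$ after accounting for the $2^{d+2k}$ prefactor; indeed, the positivity of $\frac{\pi|\omega|}{2\lambda}$ forces this regime automatically for consistency. I would state it explicitly as a hypothesis. A secondary subtlety is that \eqref{resi} is an order-of-magnitude ($\lesssim$) relation hiding the $O(1)$ constants from the central-limit estimate underlying Theorem \ref{mainthm}. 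I would note that inverting such a relation preserves the functional dependence on $\omega$, $S$, $d$, and $k$ but not the precise constants, so the resulting bound on $\lambda$ should likewise be read up to the same $\lesssim$ convention.
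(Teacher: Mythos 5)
Your proposal is correct and takes exactly the approach the paper intends: the paper's own proof is a one-line statement that the bound ``follows from algebraically manipulating Eq.~\eqref{resi}'' of Theorem~\ref{mainthm}, and your steps (divide by $2^{d+2k}$, take logarithms, flip signs, solve for $\lambda$) are precisely that manipulation carried out explicitly. Your added remarks on the positivity of the denominator and on the $\lesssim$ convention are sensible refinements the paper leaves implicit.
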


\begin{proof}
Follows from algebraically manipulating Eq. \eqref{resi} in Theorem \ref{mainthm}.
\end{proof}

\begin{thm}
\label{upperlower}
    The rescaled OTOC correlator $f(t):= F_{OTO}(t)/[F_2(0)^2+\epsilon]$ which satisfies $f(t)=1-e^{\lambda(t-t_s)}$ for $t_d << t << t_s$ is bounded from above and below. 
\end{thm}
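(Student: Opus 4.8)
The plan is to confine $f(t)$ to a fixed band whose two edges I can compute separately: the floor from positivity of the thermally regulated correlator, and the ceiling from a Cauchy--Schwarz (factorization) estimate. Concretely, I would establish the operator inequality $0 \le F_{\mathrm{OTO}}(t) \le F_2(0)^2$ and then divide by the strictly positive normalization $F_2(0)^2+\epsilon$ to obtain $0 \le f(t) \le F_2(0)^2/(F_2(0)^2+\epsilon) < 1$. The upper edge is strictly below unity precisely because $\epsilon>0$, which I would identify with the AQEC error $\varepsilon_{\mathrm{code}}$ of Theorem \ref{mainthm}; the intermediate-time form $f(t)=1-e^{\lambda(t-t_s)}$ then interpolates across this band, with the rate of descent controlled by $\lambda$ and hence by the chaos bound.

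For the upper bound I would exploit the thermofield-double representation of the regulated OTOC. Splitting the four $\rho^{1/4}$ insertions symmetrically, one writes $F_{\mathrm{OTO}}(t)=\langle \Psi(t) | \Phi(t)\rangle$ for states built from $A(t)\rho^{1/4}A(0)$ acting on the thermal purification, so that Cauchy--Schwarz gives $|F_{\mathrm{OTO}}(t)| \le \lVert \Psi(t)\rVert\,\lVert\Phi(t)\rVert$. Each norm is a regulated two-point function whose equal-time value $F_2(0)$ is maximal, since Heisenberg evolution can only spread the operator and reduce the overlap; this yields $F_{\mathrm{OTO}}(t)\le F_2(0)^2$, the disconnected (early-time factorized) value. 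The same step underpins the statement $f(t)\le 1$ visible in the assumed profile. This argument rests on analyticity of $F_{\mathrm{OTO}}(t)$ in the strip $|\operatorname{Im} t| < \beta/4$, valid as long as the $\rho^{1/4}$-separated operators do not collide in imaginary time, which is exactly the analytic structure Murthy and Srednicki \cite{Murthy_2019} use to control the large-$\omega$ behavior of $f(E,\omega)$ in Eq.~\eqref{specfuncrest}.

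For the lower bound, positivity of the ETH spectral weight $|f(E,\omega)|^2 \ge 0$ in the four-point representation forces $F_{\mathrm{OTO}}(t)\ge 0$, hence $f(t)\ge 0$. Within the window $t_d \ll t \ll t_s$ one has $t<t_s$, so $e^{\lambda(t-t_s)}<1$ and $f(t)>0$ strictly, with $f(t)\to 0^+$ only as $t\to t_s^-$, where the exponential description breaks down and the OTOC saturates rather than turning negative. The rate at which $f$ traverses the band is set by $\lambda$, which the Maldacena--Shenker--Stanford bound pins from above as $\lambda \le 2\pi/\beta$ and which the preceding theorem pins from below in terms of $\varepsilon_{\mathrm{code}}$; together these bracket the entire trajectory of $f(t)$ between two $\lambda$-controlled exponentials and render both the band width and the transition rate explicit in the code error.

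The main obstacle is the upper bound. One must verify the Cauchy--Schwarz factorization in the correct operator ordering, confirm that the resulting norms are exactly the regulated two-point functions, and justify that $F_2(0)$ dominates $F_2(t)$ for the relevant real times, all of which hinge on establishing strip analyticity of $F_{\mathrm{OTO}}(t)$ together with boundedness of its boundary values. These are precisely the ingredients of the Murthy--Srednicki argument, so the proof amounts to repackaging that analytic input; the one additional bookkeeping point is to track that $\epsilon=\varepsilon_{\mathrm{code}}>0$ throughout, since it is this strict positivity that keeps the normalization finite and the upper edge of the band strictly below $1$.
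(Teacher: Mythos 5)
Your proposal takes a different route from the paper, and its two central steps do not hold up. The paper's proof is a one-line sandwich on the \emph{exponent}: Theorem 2 gives $\lambda \gtrsim \pi|\omega|/[\,2\ln(2^{d+2k}/\varepsilon_{\mathrm{code}})-S/2\,]$ and the Maldacena--Shenker--Stanford bound gives $\lambda \le 2\pi/\beta$; since $f(t)=1-e^{\lambda(t-t_s)}$ is monotone in $\lambda$ for $t<t_s$, the profile is bracketed between the two corresponding exponentials. You do state exactly this in your closing paragraph, but only as a supplementary remark; had you made it the proof, you would have matched the paper.

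The band $0\le f(t)\le F_2(0)^2/(F_2(0)^2+\epsilon)$ that you put at the center instead rests on two claims that fail. First, positivity: writing $\tilde A=\rho^{1/8}A(t)\rho^{1/8}$ and $\tilde B=\rho^{1/8}A(0)\rho^{1/8}$, cyclicity gives $F_{\mathrm{OTO}}(t)=\Tr[(\tilde A\tilde B)^2]$, and for a product of two Hermitian but non-positive operators this trace can be negative (e.g.\ $\tilde A=\sigma_x$, $\tilde B=\sigma_y$ gives $\Tr[(i\sigma_z)^2]=-2$); ``positivity of $|f(E,\omega)|^2$'' does not rescue this, because the four-point function involves signed combinations $\overline{R_{ij}R_{jk}R_{kl}R_{li}}$ and products of $f$ at distinct frequencies, not a single modulus squared. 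Second, the ceiling: Cauchy--Schwarz on $\Tr[(\tilde A\tilde B)(\tilde A\tilde B)]$ bounds $|F_{\mathrm{OTO}}(t)|$ by the \emph{time-ordered} four-point function $\Tr[\tilde B\tilde A\tilde A\tilde B]$, not by $\|\tilde A\|_2^2\,\|\tilde B\|_2^2=F_2(0)^2$; the factorization of that time-ordered correlator into $F_2(0)^2$ is an approximate statement valid for $t\gg t_d$ (with corrections of order $\epsilon$), not an operator inequality, and ``Heisenberg evolution reduces the overlap'' is not a theorem at this level of generality. Neither gap is repaired by the strip analyticity you invoke. I recommend replacing the body of the argument with the two-sided bound on $\lambda$ and deriving the bracketing of $f(t)$ directly from its monotone dependence on $\lambda$ in the window $t_d\ll t\ll t_s$.
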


\begin{proof}
From the previous proof 
\begin{equation}
    \frac{\pi |\omega|}{2\ln \left( \frac{2^{d + 2k}}{\varepsilon_{\mathrm{code}}} \right) - \frac{S}{2}} \lesssim \lambda \leq \frac{2\pi}{\beta}.    
\end{equation}
\end{proof}
In particular, Lyapunov exponents for systems which both obey ETH and have non-trivial OTOCs that are maximally chaotic are lower-bounded in terms of the error of the code. 

\subsection{Dynamical and static fluctuations}
The dynamical fluctuations around the infinite time average are given by 
\begin{equation}
    \delta O_t^2 = \sum_{n,m} |c_n|^2 |c_m|^2 |O_{mn}|^2 \leq \max |O_{mn}|^2 = e^{-S(E)}|f_O(\overline{E}, \omega)|^2,
\end{equation}
while the static fluctuations are given by 
\begin{equation}
    \delta O_n^2 = \int d \omega e^{\beta \omega /2} \left( |f_O(E_n, \omega)|^2 + \frac{\omega}{2}\frac{\partial}{\partial E}|f_O(E_n, \omega)|^2 \right).
\end{equation}
The bound on chaos derived from ETH implies that the respective bound on the dynamical fluctuations is given by 
\begin{equation}
    \delta O_t^2  \lesssim \exp\left(-S -\frac{\pi}{\lambda} |\omega| \right),
\end{equation}
and thus, by Theorem \ref{upperlower}, 
\begin{align}
    \delta O_t^2  &\lesssim \exp\left(-\frac{3}{2}S - 2\ln \left( \frac{\varepsilon_{\mathrm{code}}}{2^{d + 2k}} \right) \right).
\end{align}
On the other hand, since the spectral function has no energy dependence, the bound on static fluctuations is simply given by 
\begin{align}
\label{singularity}
     \delta O_n^2 &\lesssim \int_{-\infty}^{+\infty} d \omega e^{\beta \omega/2} e^{-\frac{\pi |\omega|}{\lambda}} 
     = \frac{1}{\frac{\pi}{\lambda} - \frac{\beta}{2}} + \frac{1}{\frac{\pi}{\lambda} + \frac{\beta}{2}} 
     = \frac{2\pi/\lambda}{\left(\frac{\pi}{\lambda} \right)^2 - \left(\frac{\beta}{2} \right)^2}
\end{align}
An important point to note is that for the saturation of the chaos bound $(\lambda=2\pi/\beta)$ the integral diverges and is finite below the chaos bound. 

\subsection{Fluctuation-dissipation relations}

The fluctuation-dissipation relations relate the quantities 
\begin{equation}
    \widetilde{F}(\omega) = 4 \pi\cosh(\frac{\beta \omega}{2})|f_O(E_\beta, \omega)|^2, \quad \widetilde{\rho}(\omega) = 2\pi \sinh(\frac{\beta \omega}{2})|f_O(E_\beta, \omega)|^2.
\end{equation}
The fluctuation-dissipation relation is then 
\begin{equation}
    \widetilde{F}(\omega) = 2 \coth(\frac{\beta \omega}{2}) \widetilde{\rho}(\omega).
\end{equation}
The bounds in terms of the Lyapunov exponent and code error are then, respectively,
\begin{equation}
    \widetilde{F}(\omega) \lesssim 4 \pi \cosh(\frac{\beta \omega}{2}) \exp(-\frac{\pi}{\lambda}|\omega|)
\end{equation}
and 
\begin{equation}
    \widetilde{F}(\omega) \lesssim 4 \pi \cosh(\frac{\beta \omega}{2}) \exp(\frac{S}{2}-2\ln(\frac{2^{2+2k}}{\epsilon_{\text{code}}})).
\end{equation}

\section{Discussion}
\label{discussion}
In this work, we have shown how the bound on chaos, when derived from the eigenstate thermalization hypothesis (ETH), constrains the $\epsilon$-correctability of an approximate quantum error-correcting code. Our analysis operated directly at the level of the ETH ansatz and assumed locality of the underlying many-body system, as well as a separation of scales between dissipation and scrambling times. Within this framework, we could bound the code error in terms of the Lyapunov exponent using the chaos bound, linking dynamical constraints on information scrambling to the robustness of encoded quantum information.

Several open directions naturally follow from our results.
A key challenge is the realization of physical systems that simultaneously exhibit ETH behavior and the locality structure assumed in our derivation. If strict spatial locality proves too restrictive, it would be valuable to generalize the approximate quantum error correction conditions to settings with weaker forms of locality such as decaying interactions or nonlocal noise channels. Another important question is how the analysis of Brandao et al \cite{Brand_o_2019} might be extended to different error models or to non-Markovian noise, potentially revealing new regimes of approximate recoverability in chaotic systems.

A complementary direction is to formulate our results within the framework of algebraic quantum error correction \cite{Harlow:2016vwg, Pollack:2021yij}. Given the growing role of operator algebras in both quantum information theory and quantum gravity, such an algebraic reformulation could provide a better structural understanding of the interplay between ETH, chaos, and correctability. In parallel, it would be interesting to explore whether the time-scale hierarchy underlying the out-of-time-order (OTO) correlator can emerge in many-body systems outside of the traditional large-$N$ limit—perhaps through dynamical self-averaging or effective coarse-graining mechanisms.

Our results also suggest intriguing implications for holographic conformal field theories (CFTs) \cite{Maldacena:1997re, Witten:1998qj}. The standard picture of error correction in holography is formulated in terms of operator reconstruction through the holographic map, relating bulk and boundary descriptions. However, since holographic CFTs are also believed to satisfy ETH \cite{Pollack_2020, Bao_2019}, there exists an intrinsic approximate quantum error correction structure at the level of the boundary theory itself. If ETH alone suffices to reproduce the chaos bound in these systems, our results imply a way to constrain the $\epsilon$-correctability of the code naturally encoded by the holographic CFT.

Further connections arise in the context of random tensor network (RTN) models, which exhibit both equilibration and approximate error-correcting behavior \cite{qasim2025emergentstatisticalmechanicsholographic, Haferkamp_2021}. Extending our framework to such models could reveal how randomness and locality jointly control code accuracy and chaotic dynamics. Likewise, the connection to Petz recovery maps for ETH-satisfying systems \cite{Qasim:2025xpa} presents another promising direction: the Petz map provides a canonical recovery operation, and understanding its performance in chaotic many-body systems could shed light on the operational meaning of the chaos bound.

Finally, an especially appealing challenge is to formulate a hydrodynamic description of approximate quantum error correction. Since ETH already admits a hydrodynamic interpretation in terms of slow modes and conserved quantities \cite{Capizzi:2024msk}, it is natural to ask whether information-theoretic quantities such as the code error or recovery fidelity possess analogous hydrodynamic equations. It is perhaps also worth noting that the divergent structure we found in Eq. \eqref{singularity} can be removed in Hydrodynamics (a concept refereed to as pole skipping
\cite{Blake:2017ris}). Developing a hydrodynamic description of AQECC codes obtained from ETH could bridge the gap between microscopic chaos, macroscopic information flow, emergent thermodynamics, and, perhaps more speculatively, emergent gravity.

\section*{Acknowledgments}
SQ acknowledges funding from the Einstein Research Unit on Quantum Devices, Berlin Quantum, the DFG (CRC 183, FOR 2724), and the European Research Council (DebuQC).

\bibliographystyle{unsrt}
\bibliography{paper}{}

\end{document}